 \documentclass[accepted]{uai2023} 

\usepackage[american]{babel}
\usepackage{natbib} 
    \bibliographystyle{plainnat}
    
\usepackage{mathtools} 
\usepackage{booktabs} 
\usepackage{tikz} 
\usepackage{caption}
\usepackage{subcaption} 
\usepackage{amsfonts} 
\usepackage{amsthm}
\usetikzlibrary{arrows.meta} 
\usepackage{algorithm}
\usepackage{algpseudocode}
\usepackage{array}
\usepackage{multirow}



\DeclareMathOperator{\circlearrow}{\hbox{$\circ$}\kern-1.5pt\hbox{$\rightarrow$}}
\DeclareMathOperator{\circlecircle}{\hbox{$\circ$}\kern-1.2pt\hbox{$--$}\kern-1.5pt\hbox{$\circ$}}
\DeclareMathOperator{\diedgeright}{\textcolor{blue}{\boldsymbol{\rightarrow}}}
\DeclareMathOperator{\diedgeleft}{\textcolor{blue}{\boldsymbol{\leftarrow}}}
\DeclareMathOperator{\biedge}{\textcolor{red}{\boldsymbol{\leftrightarrow}}}


\DeclareMathOperator{\pa}{pa}


\DeclareMathOperator{\odds}{\text{OR}}


\def\ci{\perp\!\!\!\perp}

\newcommand{\E}{\mathbb{E}}
\newcommand{\G}{\mathcal{G}}

\newtheorem{theorem}{Theorem}

\title{Causal Inference With Outcome-Dependent Missingness And Self-Censoring}

%
%
\author[1]{\href{mailto:<jmc8@williams.edu>?Subject=Your UAI 2023 paper}{Jacob M. Chen}}
\author[2]{\href{mailto:<d.malinsky@columbia.edu>?Subject=Your UAI 2023 paper}{Daniel Malinsky}}
\author[1]{\href{mailto:<rb17@williams.edu>?Subject=Your UAI 2023 paper}{Rohit Bhattacharya}}
\affil[$\text{\textcolor{white}{1}}$]{%
\texttt{\hspace{-0.6cm} jmc8@williams.edu \hspace{0.4cm} d.malinsky@columbia.edu \hspace{0.6cm} rb17@williams.edu \vspace{0.25cm}}
}
\affil[1]{%
    Department of Computer Science\\
    Williams College
}
\affil[2]{%
    Department of Biostatistics\\
    Columbia University
}

\begin{document}
\maketitle

\begin{abstract}
We consider missingness in the context of causal inference when the outcome of interest may be missing. If the outcome directly affects its own missingness status, i.e., it is ``self-censoring'', this may lead to severely biased causal effect estimates. \cite{miao2015identification} proposed the shadow variable method to correct for bias due to self-censoring; however, verifying the required model assumptions can be difficult. Here, we propose a test based on a randomized incentive variable offered to encourage reporting of the outcome that can be used to verify identification assumptions that are sufficient to correct for both self-censoring and confounding bias. Concretely, the test confirms whether a given set of pre-treatment covariates is sufficient to block all backdoor paths between the treatment and outcome as well as all paths between the treatment and missingness indicator after conditioning on the outcome. We show that under these conditions, the causal effect is identified by using the treatment as a shadow variable, and it leads to an intuitive inverse probability weighting estimator that uses a product of the treatment and response weights. We evaluate the efficacy of our test and downstream estimator via simulations.
\end{abstract}

\section{Introduction}
\label{sec:intro}

``Self-censoring'' is a type of missingness-not-at-random (MNAR) phenomenon that poses a particularly difficult obstacle to valid inference in settings with missing data. Researchers have begun to address MNAR missing data problems, i.e., problems where the probability of missingness depends on variables that exhibit missingness themselves. However, the task of mitigating bias due to self-censoring remains relatively unexplored \citep{mohan2021graphical}. Here, we consider outcome-dependent self-censoring---situations where the outcome determines its own missingness---in the context of computing causal effects. This kind of self-censoring is quite common in practice, for example, in studies where the outcome is an attribute associated with social stigma (such as drug use or risky sexual behaviors) or in general in settings where the outcome is ascertained by voluntary survey response.

Recent work in missing data uses directed acyclic graphs (DAGs) to represent substantive assumptions about causal relations among variables, including indicators of missingness \citep{daniel2012using, mohan2013missing, mohan2021graphical}. Within this framework, \cite{bhattacharya2019mid} and \cite{nabi2020mid} derive a sound and complete criterion for when it is possible to recover the full data law of a missing data DAG model in the presence of MNAR missingness and unmeasured confounding. However, this criterion specifically excludes self-censoring, i.e., self-censoring prevents non-parametric identification of the full data law unless additional (non-structural) assumptions are made about the data generating process.

When the full structure of the DAG is unknown, the possibility of unmeasured confounding, or the existence of latent variables, poses a further challenge to observational causal inference as this complicates the process of finding a valid set of covariates to adjust for \citep{shpitser2012validity}. Methods for covariate adjustment under MNAR data have been proposed by \cite{saadati2019adjustment} and \cite{yang2019causal}; however, these methods require full knowledge of the structure of the missing data DAG and do not allow for self-censoring on the outcome. \cite{nabi2022testability} propose empirical tests to verify assumptions encoded in certain subclasses of MNAR models, but all of these also exclude the possibility of self-censoring. To recover the full data law given a self-censoring outcome and under a certain completeness condition, \cite{miao2015identification} propose the use of a \emph{shadow variable}, a variable that satisfies a relevance assumption with respect to the outcome and an exclusion restriction with respect to the outcome's missingness indicator. However, empirical tests to verify the shadow variable assumptions as well as the validity of downstream covariate adjustment remains an open problem that we seek to address in this work.

Here, we build on the work by \cite{miao2015identification} on identification under self-censoring and \cite{entner2013data} on covariate selection to propose a method for simultaneously recovering the target law -- the joint distribution of variables as if there were no missingness -- and a valid covariate adjustment set in the presence of a self-censoring outcome. 

We propose a test, based on a randomized incentive variable offered to encourage reporting of the outcome, which can be used to verify identification assumptions that are sufficient to correct for both self-censoring and confounding bias. The test proceeds in two stages. The first stage confirms dependence between the incentive for reporting and the missingness indicator of the outcome. The second stage executes a search for a pre-treatment covariate and a covariate adjustment set that satisfies three conditions: (i) independence between the treatment and the incentive conditional on the outcome, missingness indicator of the outcome, and the covariate adjustment set, (ii) dependence between the pre-treatment covariate and the missingness indicator of the outcome while conditioning on the covariate adjustment set, and (iii) independence between the same variables and adjustment set in test (ii) while additionally conditioning on the treatment. We prove that, when these conditions hold, one may use the treatment as a valid shadow variable and that there is a valid covariate adjustment set. We derive the corresponding identifying functional for the causal effect and propose an intuitive inverse probability weighting estimator that uses a product of the treatment and response weights. We evaluate the efficacy of our method via simulations.

\textbf{Related work on self-censoring}: \cite{sportisse2020estimation} propose an imputation method for self-censored data that assumes factorization according to a certain latent variable DAG and parametric models for the missingness process. \cite{mohan2018estimation} propose methods for recovery of the full data law when all variables are discrete and when certain matrices corresponding to conditional probability tables are invertible. \cite{duarte2021automated} propose an algorithm for computing bounds on the causal effect in the discrete setting, which may converge to point identification in certain cases. \cite{d2010new} and \cite{tchetgen2017general} propose instrumental variable methods that place homogeneity restrictions on the missingness process in addition to requiring the presence of a valid instrument. Though the randomized incentive variable we consider could be used as an instrumental variable, the homogeneity restriction is untestable and can be restrictive in many real-world settings. To our knowledge, the testability of identifying assumptions of the shadow variable method and covariate adjustment under self-censoring has not been explored before. 

\section{Motivating Example}
\label{sec:motivating_example}

For our motivating example, we describe a hypothetical study inspired by \cite{turner2009improving} for evaluating the effect of public health programs that encourage safe sex practices. Consider an observational study where researchers offer a sexual education program to encourage condom use. That is, enrollment into this program is not randomized. In a follow-up survey, participants may also choose not to disclose their post-program condom use habits, precisely because of the opinions they hold on the practice. Hence, to estimate the causal effect of the program on improving condom use, the researchers must overcomes challenges related to both confounding and self-censoring.

To incentivize response to answer sensitive questions regarding sexual behavior, the researchers randomly assigned study participants to be surveyed via a phone interview conducted by a real human being or an automated program called the Telephone Audio Computer-Assisted Self-Interviewing (T-ACASI) program. \cite{turner2009improving} applied this randomized incentive strategy to successfully increase response rates on questions pertaining to drug use or risky sexual behaviors in a study conducted in Baltimore, USA. 
One can also imagine other randomized incentives offered to increase response\footnote{The effect of the incentive on response need not be monotonic. Our method simply relies on some correlation between the incentive and response, which we can test using the observed data.}, such as random lotteries for gift cards. Such incentives are often offered as part of studies that have a survey component. However, in most cases, the incentives can only encourage response but not guarantee it, and so the issue of self-censoring still persists.

In this work, we aim to simultaneously address the challenges of covariate selection and overcoming self-censoring, or non-response bias, inherent in observational studies that, for example, ask respondents sensitive questions.

\section{Model and Problem Setup}
\label{sec:preliminaries}

We assume the causal structure of the system is represented via a directed acyclic graph $\G$ defined over a set of vertices ${\bf V} = \{A, Y^{(1)}, R_Y, Y, I\} \cup {\bf W} \cup {\bf U}$, where $A$ represents the treatment variable, $Y^{(1)}$ represents the outcome had we -- possibly contrary to fact -- been able to observe it, $R_Y$ represents the corresponding binary missingness indicator for the outcome, $Y$ represents the factual observed outcome -- which may be either a numeric value or ``$?$'' if the observation is missing, $I$ represents the incentive variable -- a randomized variable that affects whether or not an individual responds and we observe their corresponding outcome, ${\bf W}$ represents an observed set of pre-treatment covariates, and ${\bf U}$ denotes a set of unobserved (latent) covariates. The entire set ${\bf V}$ is assumed to be causally sufficient, i.e., there are no additional unmeasured common causes of any two variables in ${\bf V}$. 

The above setup describes a causal Bayesian network over the variables ${\bf V}$ \citep{spirtes2000causation, pearl2009causality}, with an added restriction on the relationship between the missing variable $Y^{(1)}$, its missingness indicator, and the observed outcome: $Y = Y^{(1)}$ when $R_Y=1$ and $Y=$ $?$ when $R_Y=0$. This is the missing data version of the \emph{consistency} assumption in causal inference \citep{nabi2022causal}. Due to the fixed deterministic nature of this relationship, we omit drawing the observed outcome $Y$ in some of our figures.

Independences implied in the full data law $p({\bf V})$ can be read off from $\G$ via the d-separation criterion \citep{pearl1988probabilistic, mohan2021graphical}.
Here, we make a \emph{faithfulness} assumption \citep{spirtes2000causation}, which states that any independencies in the distribution $p({\bf V})$ must correspond to d-separation statements in $\G$. Formally, $A \ci B \mid {\bf C}$ in $p({\bf V})$ if and only if  $A \ci_{\text{d-sep}} B \mid {\bf C}$ in $\G$. For tests that involve conditioning on the missing outcome $Y^{(1)}$, we also assume an extension of faithfulness used in causal discovery procedures for missing data settings \citep{tu2019causal}. This assumption states that any independences that exist conditional on $R_Y$ in $p(\textbf{V})$ must also hold in the observed data, i.e., conditional on $R_Y=1$. 
Formally, we assume $A \ci B \mid {\bf C}, R_Y$ in $p(\textbf{V})$ if and only if $A \ci B \mid {\bf C}, R_Y=1$ in $p(\textbf{V})$ when $Y^{(1)}$ is one of $A$, $B$, or an element of the conditioning set ${\bf C}$.

The above assumptions are commonly used across most graphical model selection procedures \citep{spirtes2000causation}. We now list and provide brief justification for additional structural assumptions important for our method. Let $\pa_\G(V)$ denote the parents of the variable $V$ in the graph $\G$. We assume that the data are generated from a distribution $p({\bf V})$ that is Markov and faithful with respect to a DAG $\G$ that satisfies the structural assumptions M1-M4.
\begin{enumerate}
    \item[(M1)] The only parents of $Y$ are $Y^{(1)}$ and $R_Y$, i.e., $\pa_\G(Y) = \{Y^{(1)}, R_Y\}$.
    \item[(M2)] There is an edge from $A$ to $Y^{(1)}$ and an edge from $Y^{(1)}$ to $R_Y$ (i.e. the causal path $A\diedgeright Y^{(1)} \diedgeright R_Y$ exists in the graph).
    \item[(M3)] The incentive $I$ is randomly assigned (i.e. $\pa_\G(I)=\emptyset$) and may only be a parent of the missingness indicator $R_Y$.
    \item[(M4)] $Y$ is not a parent of any variables in ${\bf V}$ and does not have any children, $R_Y$ is not a parent of any variables in ${\bf V} \setminus \{Y\}$, $A$ is not a parent of any variables in ${\bf W} \cup {\bf U}$, and $Y^{(1)}$ cannot be a parent of $A$ nor any variables in ${\bf W} \cup {\bf U}$. That is, we have an ordering, $\{I\} \cup {\bf W} \cup {\bf U} < A < Y^{(1)} < R_Y < Y$.
\end{enumerate}
Assumption M1 and disallowing $R_Y$ from having any children aside from $Y$ in assumption M4 are standard restrictions in missing data DAG models \citep{mohan2013missing}. We require assumption M2 as it simplifies our empirical tests. However, M2 is a relatively mild assumption as the existence of these edges is the primary motivation for applying our method. Assumption M3 makes sure that $I$ is a valid proxy variable for designing indirect tests about the validity of the treatment as a shadow variable, which is described in Section~\ref{sec:identification}. Finally, assumption M4 states that ${\bf W} \cup {\bf U}$ are all pre-treatment variables.

Figure \ref{fig:assumptions} graphically displays assumptions M1-M4. The red edges are assumed to exist while the blue edges may or may not exist. We draw $Y$ here to illustrate M1, but we will omit it and the red dashed edges in all figures going forward (due to the deterministic nature of its relation with $Y^{(1)}$ and $R_Y$.)
\begin{figure}[ht]
    \centering
    \scalebox{0.8}{
	    \begin{tikzpicture}[>=stealth, node distance=1.5cm]
			\tikzstyle{square} = [draw, thick, minimum size=1.0mm, inner sep=3pt]
			\begin{scope}
				\path[->, very thick]
				node[] (a) {$A$}
                node[right of=a] (y) {$Y^{(1)}$}
                node[right of=y] (ry) {$R_Y$}
                node[below of=y, xshift=0.75cm] (ystar) {$Y$}
                node[right of=ry] (i) {$I$}
                node[above of=y] (wu) {${\bf W} \cup {\bf U}$}
                
                (a) edge[red] (y)
                (y) edge[red] (ry)
                (y) edge[red, dashed] (ystar)
                (ry) edge[red, dashed] (ystar)
                (i) edge[blue] (ry)
                (wu) edge[blue] (a)
                (wu) edge[blue] (y)
                (wu) edge[blue] (ry)
				;
			\end{scope}
		\end{tikzpicture}
    }
    \caption{Graph depicting assumptions M1-M4.}
    \label{fig:assumptions}
\end{figure}
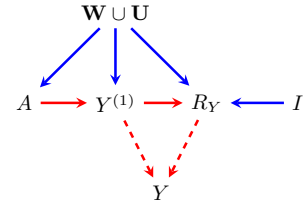

We briefly connect our notation and problem setup back to the initial motivating example. The variable $A$ represents whether or not an individual enrolls in the sexual education program. The variable $Y^{(1)}$ represents an individual's (true) post-program condom use habits. 
The variable $R_Y$ represents whether an individual reports their condom use habits. The variable $I$ represents whether a participant was interviewed by the T-ACASI program or by a human interviewer. Finally, the variables ${\bf W} \cup {\bf U}$ represent fully observed and unobserved covariates in the problem, respectively. We now formally define our target causal parameter. 

\subsubsection*{Target of Inference}
Under missingness, a causal effect of the treatment on the outcome corresponds to a contrast between  potential outcomes $Y^{(A=a, R_Y=1)}$ and $Y^{(A=a', R_Y=1)}$, where $Y^{(A=a, R_Y=1)}$ denotes the value of the outcome had the treatment been set to some value $a$ via intervention and had the outcome been observed. Moving forward, we use $Y^{(a, 1)}$ and $Y^{(a', 1)}$ for brevity. Our target of inference is the average causal effect (ACE), i.e., the mean difference $\E[Y^{(a, 1)} - Y^{(a', 1)}]$.\footnote{Similar notation for representing potential outcomes under missingness has been used in \cite{nabi2022causal}. This can also be expressed using do-notation as in \cite{saadati2019adjustment}.} In the next section we derive an identification formula for the ACE in terms of the observed data distribution $p(A, Y, R_Y, {\bf Z})$ where ${\bf Z} \subset {\bf W}$ based on the backdoor adjustment formula from \cite{pearl1995causal} and the shadow variable method proposed by \cite{miao2015identification}.

\section{Identification}
\label{sec:identification}

In this section we demonstrate recovery of the ACE assuming we are given a valid backdoor adjustment set and shadow variable adjustment set ${\bf Z} \subset {\bf W}$. 
Note that we use a strict subset relation as we will use at least one of the remaining pre-treatment covariates for verification of the identifying assumptions. We can also ignore the incentive variable $I$ here because it does not play a role in identification, only in the verification of the assumptions surrounding ${\bf Z}$ later on.

In the following, we focus on identification of $\E[Y^{(a,1)}]$; identification of $\E[Y^{(a',1)}]$ follows similarly. We perform identification in two steps. In the first step, we assume we have access to the underlying counterfactual $Y^{(1)}$. A set ${\bf Z}$ is said to be a valid backdoor adjustment set relative to the treatment $A$ and outcome $Y^{(1)}$ if
\begin{enumerate}
    \item[(B1)] ${\bf Z}$ does not contain any variables that are causal descendants of $A$, and
    \item[(B2)] $A$ and $Y^{(1)}$ are d-separated when conditioning on ${\bf Z}$ in a modified graph where all outgoing edges from $A$ are deleted.
\end{enumerate}
If ${\bf Z}$ is a valid backdoor adjustment set, then we have the following backdoor adjustment formula in terms of the full data law involving $Y^{(1)}$ \citep{pearl1995causal}.
\begin{align}
    \E[Y^{(a, 1)}] &= \sum_{{\bf Z}} \E[Y^{(1)} \mid A=a, {\bf Z}] \times p({\bf Z})
    \label{eq:backdoor}
\end{align}
However, due to missingness, we have access to only the margin $p(A, Y, R_Y, {\bf Z})$, which does not include $Y^{(1)}$. In our problem setting, the counterfactual expression $\E[Y^{(1)} \mid A=a, {\bf Z}]$ is not equal to the observed regression $ \E[Y \mid A=a, {\bf Z}, R_Y=1]$ since $Y^{(1)} \not\ci R_Y \mid A, {\bf Z}$ from assumption M2, i.e., due to self-censoring.

Next, we define shadow variables as described by \cite{miao2015identification}, and use it to overcome the issue of self-censoring. A variable $S$ is a valid \textit{shadow variable} if it is a fully observed variable that satisfies the following independence relations.
\begin{enumerate}
    \item[(S1)] $S \not \ci Y^{(1)} \mid R_Y=1, {\bf Z}$, and
    \item[(S2)] $S \ci R_Y \mid Y^{(1)}, {\bf Z}$.
\end{enumerate}
A shadow variable $S$ helps us recover the target law as if there were no missingness by enabling identification of the propensity score of the missingness indicator $p(R_Y=1 \mid A, Y^{(1)}, {\bf Z})$ from the margin $p(A, Y, R_Y, {\bf Z})$. Intuitively, shadow variables have some non-zero effect on the self-censoring variable, which means they contain some useful information about it. In addition, they are independent of the missingness mechanism, which makes it possible for us to use the shadow variable to infer information about the self-censoring variable using just observed rows of data. We refer to the set ${\bf Z} \subset {\bf W}$ that satisfies conditions S1 and S2 as the {\it shadow variable adjustment set}. In our method, we verify whether the treatment $A$ is a valid shadow variable and directly use it as such. 

To identify \eqref{eq:backdoor}, it is sufficient to identify the joint distribution of $p(A, Y^{(1)}, {\bf Z})$. By the chain rule of probability we have,
\begin{align}
    p(A, Y^{(1)}, {\bf Z}) = \frac{p(A, Y^{(1)}, {\bf Z}, R_Y=1)}{p(R_Y = 1 \mid A, Y^{(1)}, {\bf Z})}.
    \label{eq:chain_rule}
\end{align}

The numerator is a function of observed data due to consistency. Identification of this joint then reduces to identification of the propensity score $p(R_Y = 1 \mid A, Y^{(1)}, {\bf Z})$. If we are able to verify that $A$ is a valid shadow variable, then the independence $A \ci R_Y \mid Y^{(1)}, {\bf Z}$ will hold from S2. Hence, $p(R_Y=1 \mid A, Y^{(1)}, {\bf Z}) = p(R_Y=1 \mid Y^{(1)}, {\bf Z})$.\footnote{Note that from \eqref{eq:chain_rule} we identify the full target law. However, there may be cases when recovery of the full target law is not necessary to recover the causal effect of interest.}

Following \cite{miao2015identification}, we use an odds ratio factorization of the propensity score to perform the identification of $p(R_Y=1 \mid Y^{(1)}, {\bf Z})$.
For two variables $X, Y$ and a set of variables ${\bf Z}$, the conditional odds ratio function is defined as
\begin{align*}
    \odds(X, Y &\mid {\bf Z}) = \\
    &\frac{p(X \mid Y, {\bf Z})}{p(X=x_0 \mid Y, {\bf Z})} \times \frac{p(X=x_0 \mid Y=y_0, {\bf Z})}{p(X \mid Y=y_0, {\bf Z})}, \nonumber
\end{align*}
where $x_0$ and $y_0$ are the specified reference values for $X$ and $Y$. From the above definition, the odds ratio is 1 whenever $X=x_0$ or $Y=y_0$. Furthermore, the odds ratio is 1 for all values of $X, Y, Z$ if and only if $X \ci Y \mid {\bf Z}$. 

Without loss of generality, we pick any arbitrary value $y_0$ in the state space of $Y^{(1)}$ to be the reference value for the outcome and $R_Y=1$ to be the reference value of the missingness indicator. Let $\pi_0({\bf Z})$ denote the propensity score for $R_Y$ at the reference value $y_0$, i.e., $\pi_0({\bf Z}) \coloneqq p(R_Y =1 \mid Y^{(1)} = y_0, {\bf Z})$. Let $\eta(Y^{(1)}, {\bf Z})$ denote the conditional odds ratio function relating $Y^{(1)}$ and $R_Y$ at values where $R_Y=0$, i.e., $\eta(Y^{(1)}, {\bf Z}) \coloneqq \odds(R_Y=0, Y^{(1)} \mid {\bf Z})$. Then the odds ratio factorization of the propensity score can be written as (to keep our results self-contained, a proof of this factorization is provided in the Appendix),
\begin{align}
    p(R_Y = 1 &\mid Y^{(1)}, {\bf Z}) = \nonumber \\
    &\frac{\pi_0({\bf Z})}{\pi_0({\bf Z}) + \eta(Y^{(1)}, {\bf Z})(1-\pi_0({\bf Z}))}
    \label{eq:or_factorization}
\end{align}
Whenever $Y^{(1)} = y_0$, $\eta(Y^{(1)}, {\bf Z})=1$, and we have that $p(R_Y=1 \mid Y^{(1)}=y_0, {\bf Z}) = \pi_0({\bf Z})$ because the denominator simplifies to a value of $1$. At any other value of $Y^{(1)}$, $\eta(Y^{(1)}, {\bf Z}) \neq 1$, and the odds ratio factorization of the propensity score will return a different value.

When $A$ is a valid shadow variable and the conditional distribution $p(Y^{(1)} \mid R_Y=1, A, {\bf Z})$ satisfies a widely used completeness condition\footnote{For all square-integrable functions $h(A, Y^{(1)})$, $\E[h(A, Y^{(1)}) \mid R_Y=1, A, {\bf Z}] = 0$ almost surely if and only if $h(A, Y^{(1)}) = 0$ almost surely (see \cite{newey2003instrumental} or Section 3 of \cite{miao2015identification} for more details).}, \cite{miao2015identification} show that $\pi_0$ and $\eta$ (and hence the  propensity score) are identified from the observed data.

Consider Figure \ref{fig:adjustment_sets}, which demonstrates how certain sets of pre-treatment covariates might be sufficient to adjust for confounding but not missingness or vice versa. In this DAG, the set $\{W_1\}$ satisfies B1 and B2. However, it does not satisfy S1 and S2 when considering the treatment $A$ as the shadow variable because of the open collider at $Y^{(1)}$ on the path $A \diedgeright Y^{(1)} \diedgeleft W_2 \diedgeright R_Y$. On the other hand, the set $\{W_2\}$ satisfies S1 and S2 when considering the treatment $A$ as the shadow variable, but the backdoor path through $W_1$ remains open. Only sets that include both $W_1$ and $W_2$ such as ${\bf Z} = \{W_1, W_2\}$ satisfy all of B1, B2, S1, and S2 simultaneously. We now present our main identification results that relies on a set of covariates that satisfy all these conditions simultaneously.
\begin{figure}[ht]
    \centering
    \scalebox{0.8}{
	    \begin{tikzpicture}[>=stealth, node distance=1.5cm]
			\tikzstyle{square} = [draw, thick, minimum size=1.0mm, inner sep=3pt]
			\begin{scope}
				\path[->, very thick]
				node[] (a) {$A$}
                node[right of=a] (y) {$Y^{(1)}$}
                node[right of=y] (ry) {$R_Y$}
                node[right of=ry] (i) {$I$}
                node[above of=a, xshift=0.75cm] (w1) {$W_1$}
                node[above of=y, xshift=0.75cm] (w2) {$W_2$}
                node[above of=a, xshift=-0.75cm] (w3) {$W_3$}
                
                (a) edge[blue] (y)
                (y) edge[blue] (ry)
                (i) edge[blue] (ry)
                (w1) edge[blue] (a)
                (w1) edge[blue] (y)
                (w2) edge[blue] (y)
                (w2) edge[blue] (ry)
                (w3) edge[blue] (a)
				;
			\end{scope}
		\end{tikzpicture}
    }
    \caption{Graph demonstrating how different subsets of {\bf W} satisfy the identifying assumptions B1, B2, S1, and S2.
    }
    \label{fig:adjustment_sets}
\end{figure}
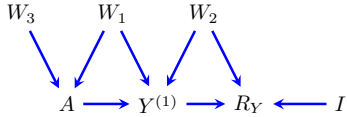
\begin{theorem}
    Under structural assumptions M1-M4 and completeness of $p(Y^{(1)} \mid R_Y=1, A, {\bf Z})$, if ${\bf Z}$ satisfies B1 and B2 and $A$ is a valid shadow variable satisfying S1 and S2 conditional on ${\bf Z}$, then the expected value of the counterfactual outcome $\E[Y^{(a,1)}]$ is identified from the observed data distribution $p(A, Y, {\bf Z}, R_Y)$ as follows:
    \begin{align}
        \E[Y^{(a,1)}] = \E\bigg[\frac{R_Y \times \mathbb{I}(A=a) \times Y}{p(R_Y=1 | Y^{(1)}, {\bf Z}) \times p(A=a | {\bf Z})}\bigg]
        \label{eq:identification}
    \end{align}
    \label{thm:identification}
\end{theorem}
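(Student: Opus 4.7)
The plan is to reduce everything to two ingredients: the backdoor adjustment formula \eqref{eq:backdoor}, which handles confounding, and the shadow variable identification of the propensity score $p(R_Y=1 \mid Y^{(1)}, {\bf Z})$, which handles self-censoring. Once both are in hand, the theorem follows from algebraic manipulation using the chain rule and consistency ($Y = Y^{(1)}$ when $R_Y=1$) to rewrite the identifying functional as an inverse-probability-weighted expectation over the observed data.

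First, I would start from the backdoor formula $\E[Y^{(a,1)}] = \sum_{\bf Z} \E[Y^{(1)} \mid A=a, {\bf Z}] \, p({\bf Z})$, which is valid because ${\bf Z}$ satisfies B1 and B2. This reduces the problem to identifying the joint $p(A, Y^{(1)}, {\bf Z})$ from the observed distribution $p(A, Y, R_Y, {\bf Z})$. Applying the chain rule exactly as in \eqref{eq:chain_rule} and using S2 to simplify $p(R_Y=1 \mid A, Y^{(1)}, {\bf Z}) = p(R_Y=1 \mid Y^{(1)}, {\bf Z})$, the remaining obstacle is identification of the propensity score $p(R_Y=1 \mid Y^{(1)}, {\bf Z})$, since the numerator $p(A, Y^{(1)}, {\bf Z}, R_Y=1)$ is a function of the observed margin by consistency.

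Second, for identification of the propensity score, I would invoke the odds ratio factorization \eqref{eq:or_factorization} together with the shadow variable machinery of \cite{miao2015identification}. Under S1, S2, and the stated completeness condition on $p(Y^{(1)} \mid R_Y=1, A, {\bf Z})$, the functions $\pi_0({\bf Z})$ and $\eta(Y^{(1)}, {\bf Z})$ are the unique solutions to an integral equation defined entirely in terms of the observed distribution, which identifies $p(R_Y=1 \mid Y^{(1)}, {\bf Z})$. This is the main technical obstacle, but the argument can be invoked directly from Miao et al.\ rather than re-derived.

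Finally, I would substitute the identified joint back into the backdoor formula. Writing $p(Y^{(1)} \mid A=a, {\bf Z}) \, p({\bf Z}) = p(A=a, Y^{(1)}, {\bf Z}, R_Y=1) / \bigl[p(R_Y=1 \mid Y^{(1)}, {\bf Z}) \, p(A=a \mid {\bf Z})\bigr]$, the sum over ${\bf Z}$ and $Y^{(1)}$ can be recognized as an expectation over the observed distribution $p(A, Y, R_Y, {\bf Z})$, where the factor $R_Y \cdot \mathbb{I}(A=a)$ restricts to rows with $R_Y=1$ and $A=a$, and $Y = Y^{(1)}$ on that event by consistency. Matching terms yields exactly \eqref{eq:identification}, completing the proof.
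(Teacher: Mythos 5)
Your proposal is correct and follows essentially the same argument as the paper's proof: both rest on the backdoor formula from B1--B2, the chain-rule decomposition of $p(A, Y^{(1)}, {\bf Z})$ with S2 used to drop $A$ from the propensity score, consistency to replace $Y^{(1)}$ by $Y$ on the event $R_Y=1$, and the odds-ratio/completeness argument of \cite{miao2015identification} to identify $p(R_Y=1 \mid Y^{(1)}, {\bf Z})$. The only difference is cosmetic --- the paper runs the chain of equalities from the inverse-probability-weighted functional back to $\E[Y^{(a,1)}]$, whereas you derive it in the forward direction.
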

A full proof of Theorem \ref{thm:identification} is provided in the Appendix, but the intuition is simple. The use of $R_Y$ in the numerator of \eqref{eq:identification} ensures that we use only observed rows of data. Further, under our assumptions, both propensity scores in the denominator are identified as functions of the observed data; $p(A=a \mid {\bf Z})$ depends on only observed data quantities and $p(R_Y=1 \mid Y^{(1)}, {\bf Z})$ is identified via the odds ratio factorization in \eqref{eq:or_factorization} with the treatment as a shadow variable. We can then apply standard laws of probability to show that \eqref{eq:identification} is equivalent to the full data law adjustment functional in \eqref{eq:backdoor}. Since ${\bf Z}$ is assumed to be a valid backdoor adjustment set, this further implies equivalence of \eqref{eq:identification} to the counterfactual mean $\E[Y^{(a, 1)}]$. 

The above identification argument relies on the absence of certain edges from the graph $\G$ in order to satisfy assumptions S1, S2, B1, and B2. Figure \ref{fig:invalidating_edges} shows the edges that, if present, preclude identification. In the figure we use bidirected edges as a shorthand for a path involving unmeasured variables. For example, $A \biedge Y^{(1)}$ is shorthand for the presence of a d-connecting path $A \diedgeleft U_1 \cdots U_k \diedgeright Y^{(1)}$, where $U_1,\dots, U_k$ are unmeasured variables in ${\bf U}$. In the figure, it is possible for the treatment $A$ and a set ${\bf Z} \subset {\bf W}$ to satisfy the shadow variable conditions S1 and S2 whenever the red dashed edges are absent. Similarly, it is possible for ${\bf Z} \subset {\bf W}$ to satisfy the backdoor conditions B1 and B2 whenever the green dashed edge is absent. In essence, our method in the next section tests for the absence of these edges.
\begin{figure}[ht]
    \centering
    \scalebox{0.8}{
	    \begin{tikzpicture}[>=stealth, node distance=1.75cm]
			\tikzstyle{square} = [draw, thick, minimum size=1.0mm, inner sep=3pt]
			\begin{scope}
				\path[->, very thick]
				node[] (a) {$A$}
                node[right of=a] (y) {$Y^{(1)}$}
                node[right of=y] (ry) {$R_Y$}
                node[right of=ry] (i) {$I$}
                node[above of=y] (w) {${\bf W}$}
                
                (a) edge[blue] (y)
                (y) edge[blue] (ry)
                (i) edge[blue] (ry)
                (w) edge[blue] (a)
                (w) edge[blue] (y)
                (w) edge[blue] (ry)
                (a) edge[red, dashed, bend right] (ry)
                (a) edge[red, dashed, bend right, <->, in=240, out=300] (ry)
                (a) edge[green, dashed, bend left, <->] (y)
                (y) edge[red, dashed, bend left, <->] (ry)
				;
			\end{scope}
		\end{tikzpicture}
    }
    \caption{Red and green dashed edges impede identification.}
    \label{fig:invalidating_edges}
\end{figure}
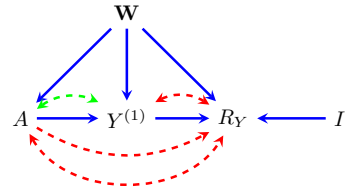

\section{Tests For Identification Conditions}
\label{sec:verification}

\cite{entner2013data} propose a two-stage test to verify whether a set of pre-treatment covariates ${\bf Z}$ satisfy B1 and B2 for backdoor adjustment when there is no missing data or sample selection bias. However, from the previous section we have seen that, to overcome self-censoring, ${\bf Z}$ must also satisfy the shadow variable conditions S1 and S2 for some candidate shadow variable. Testing S2 using observed data is impossible in general, though, as such a test involves conditioning on the counterfactual variable $Y^{(1)}$. Here, we augment the test from \cite{entner2013data} so that it is capable of testing for the backdoor adjustment conditions in addition to the shadow variable assumptions when considering the treatment $A$ as a shadow variable and only using observed data quantities. This allows us to validate the crucial identifying assumptions laid out in the previous section.

\subsection{Method}
\label{subsec:method}

The two-stage method described in \cite{entner2013data} searches for a $W \in {\bf W}$ and ${\bf Z} \subseteq {\bf W} \setminus \{W\}$ such that (i) $W \not \ci Y \mid {\bf Z}$ and (ii) $W \ci Y \mid {\bf Z}, A$. They assume the same partial order of variables as in assumption M4 except that the variables $I$ and $R_Y$ are not present in the graph. In addition, $Y$ is a fully observed variable. \cite{entner2013data} proved that, when conditions (i) and (ii) hold, {\bf Z} is a valid backdoor adjustment set for the causal effect of $A$ on $Y$.

For our method, we propose to test if
\begin{enumerate}
    \item[(C1)] $I \not \ci R_Y$ 
\end{enumerate}
and then search for some $W \in {\bf W}$ and
${\bf Z} \subseteq {\bf W} \setminus \{W\}$ 
such that
\begin{enumerate}
    \item[(C2)] $A \ci I \mid Y^{(1)}, R_Y=1, {\bf Z}$
    \item[(C3)] $W \not \ci R_Y \mid {\bf Z}$
    \item[(C4)] $W \ci R_Y \mid A, {\bf Z}$ 
\end{enumerate}

Conditions C1 and C2 are used to confirm the absence of the red dashed edges in Figure~\ref{fig:invalidating_edges} that would prevent us from using $A$ as a valid shadow variable; conditions C3 and C4 are a modification of the tests from \cite{entner2013data} that indirectly confirm the validity of ${\bf Z}$ as a backdoor adjustment set for the counterfactual outcome by using the missingness indicator to formulate the tests instead. This is formalized in the theorem below.
\begin{theorem}
    When C1-C4 hold in a distribution $p({\bf V})$ that is Markov and faithful w.r.t a causal DAG $\G$ satisfying assumptions M1-M4, ${\bf Z}$ is a valid backdoor adjustment set for the causal effect of $A$ on $Y^{(1)}$, and $A$ is a valid shadow variable with ${\bf Z}$ as a shadow variable adjustment set.
\end{theorem}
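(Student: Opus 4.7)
The plan is to verify each of the four identifying conditions B1, B2, S1, S2 separately, translating the observed independences C1--C4 into d-separation statements in $\G$ via faithfulness (and its extension for tests that involve conditioning on $Y^{(1)}$ or $R_Y=1$). B1 is immediate from M4: since ${\bf Z} \subseteq {\bf W}$ is entirely pre-treatment, it contains no descendants of $A$. S1 follows from the direct edge $A \to Y^{(1)}$ guaranteed by M2, which d-connects $A$ and $Y^{(1)}$ given any set excluding $Y^{(1)}$, in particular $\{R_Y, {\bf Z}\}$; the faithfulness extension then yields $A \not\ci Y^{(1)} \mid R_Y=1, {\bf Z}$.

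For S2, I would exploit the sparse neighborhood of $I$. Because M3 gives $I$ no parents and $R_Y$ as its only possible child, C1 together with faithfulness forces the edge $I \to R_Y$ to be present and ensures that $I$ connects to the rest of $\G$ only through $R_Y$. Suppose for contradiction that $A \not\ci R_Y \mid Y^{(1)}, {\bf Z}$ in $\G$. Any such open path must enter $R_Y$ via an incoming edge, since by M1 and M4 the only child of $R_Y$ is the deterministic sink $Y$. Appending $I \to R_Y$ turns $R_Y$ into a collider whose opening requires exactly the additional conditioning on $R_Y=1$, while no other node on the original path changes its collider/non-collider status. The result is an open $A$-to-$I$ path given $\{Y^{(1)}, R_Y=1, {\bf Z}\}$, contradicting C2 via the faithfulness extension.

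The main technical step, which I expect to be the principal obstacle, is B2. I would adapt the Entner--Hoyer--Spirtes two-stage argument, with the edge $Y^{(1)} \to R_Y$ from M2 serving as the bridge that lets tests on the observed $R_Y$ encode information about the latent $Y^{(1)}$. Suppose some backdoor path $q$ from $A$ to $Y^{(1)}$ remains open given ${\bf Z}$. A short case analysis --- ruling out the alternative because it would force $R_Y$ to appear on $q$ as a collider with no descendant in ${\bf Z}$ --- shows $q$ must end with an edge into $Y^{(1)}$; appending $Y^{(1)} \to R_Y$ then gives an open path $q'$ from $A$ to $R_Y$ given ${\bf Z}$ that still enters $A$ via a backdoor. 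Meanwhile, C3 provides an open path $r$ from $W$ to $R_Y$ given ${\bf Z}$, and C4 forces conditioning on $A$ to block $r$. Since conditioning can only close paths through non-colliders, $A$ must lie on $r$ as a non-collider; the pre-treatment order of M4 together with $A < R_Y$ eliminates the fork and reversed-chain subcases, leaving $r$ as a directed path $W \to \cdots \to A \to \cdots \to R_Y$. Splicing the $W \to \cdots \to A$ prefix of $r$ onto $q'$ yields $W \to \cdots \to A \leftarrow \cdots \to Y^{(1)} \to R_Y$, on which $A$ is now a collider opened precisely by the conditioning on $A$; this produces an open $W$-to-$R_Y$ path given $\{A, {\bf Z}\}$, contradicting C4.

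The delicate part throughout is verifying that these concatenations remain d-connecting under the evolving conditioning sets --- in particular, that adding $A$ or $R_Y=1$ does not inadvertently open an auxiliary shortcut, that overlaps between the spliced sub-paths can be handled by passing to a simple sub-path without disturbing the collider/non-collider bookkeeping, and that internal nodes retain their required status. The pre-treatment ordering from M4 and the sparse neighborhoods of $I$ and $R_Y$ from M1--M3 are exactly what make this case analysis tractable.
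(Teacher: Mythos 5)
Your proposal is correct, and it arrives at the same conclusion by a route that differs from the paper's in two substantive ways. First, for S2 the paper runs a four-case analysis over path types from $A$ to $R_Y$ given $\{Y^{(1)}, {\bf Z}\}$, and it leans on C3/C4 (via the backdoor-blocking property of ${\bf Z}$) to dispose of the backdoor-path cases; your argument instead handles \emph{every} open path at once by appending the edge $I \diedgeright R_Y$ (whose existence follows from C1 and M3), turning $R_Y$ into a conditioned-on collider, and contradicting C2. This is a cleaner modularization --- it shows C1 and C2 alone (with M1--M4) suffice for the shadow-variable condition, while C3 and C4 are needed only for the backdoor condition --- and your bookkeeping is sound: the terminal edge of any such path must point into $R_Y$ since $Y$ is a deterministic sink, enlarging the conditioning set by $R_Y$ cannot close any non-collider on the original path, and $I$ cannot appear as an internal node because its only neighbor is $R_Y$. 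Second, for B2 the paper simply cites \cite{entner2013data} for the implication from C3/C4 to backdoor validity with respect to $R_Y$ and then extends backdoor paths $A \diedgeleft \cdots \diedgeright Y^{(1)}$ along $Y^{(1)} \diedgeright R_Y$; you re-derive the Entner--Hoyer--Spirtes splicing argument from scratch, which is valid but costs you the delicate overlap/simple-path bookkeeping you flag at the end. One small inaccuracy: M4 does not force the $W$-to-$A$ segment of $r$ to be a \emph{directed} path $W \diedgeright \cdots \diedgeright A$ (it could be, e.g., $W \diedgeleft U \diedgeright A$); what the ordering actually pins down is only that the edge of $r$ incident to $A$ on the $W$-side points \emph{into} $A$, which is all your splice needs to make $A$ a collider opened by conditioning. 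Neither difference affects correctness.
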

\begin{proof}
    From \cite{entner2013data}, if C3 and C4 hold, {\bf Z} is a valid backdoor adjustment set for computing the effect of $A$ on $R_Y$. From assumptions M2 and M4, all backdoor paths between $A$ and $Y^{(1)}$, i.e., $A\diedgeleft \cdots \diedgeright Y^{(1)}$, also extend into a backdoor path $A\diedgeleft \cdots \diedgeright Y^{(1)} \diedgeright R_Y$ to $R_Y$. Thus, if ${\bf Z}$ blocks all backdoor paths between $A$ and $R_Y$, it also blocks all backdoor paths between $A$ and $Y^{(1)}$, making ${\bf Z}$ a valid adjustment set for the effect of $A$ on $Y^{(1)}$.

    Next, we prove that ${\bf Z}$ can be used for  shadow variable adjustment using $A$ as the  shadow variable. Condition S1 holds trivially as we have $A \diedgeright Y^{(1)}$ according to assumption M2. We now prove S2 also holds: that $A \ci R_Y \mid Y^{(1)}, {\bf Z}$. First, under assumption M2, C1 ensures that the randomized incentive $I$ has a directed edge to $R_Y$ and no other outgoing edges. In order for $A$ to be a valid shadow variable, all paths between $A$ and $R_Y$ conditional on $Y^{(1)}$ and ${\bf Z}$ must be blocked. The following 4 cases cover all possible paths between $A$ and $R_Y$ given our assumptions.
    \begin{enumerate}
        \item Causal paths from $A$ to $R_Y$. One possible causal path $A \diedgeright Y^{(1)} \diedgeright R_Y$  exists by assumption M4, but it is blocked by conditioning on $Y^{(1)}$. The second possibility of $A \diedgeright R_Y$ cannot exist because it would imply an open path between $I$ and $A$ conditional on $Y^{(1)}, R_Y=1,$ and ${\bf Z}$, contradicting C2. 
        \item Paths of the form $A \diedgeright Y^{(1)} \diedgeleft \dots \diedgeright R_Y$. Any open paths of this form contradict C2, as it implies the dependence $A \not\ci I \mid Y^{(1)}, R_Y, {\bf Z}$.
        \item Backdoor paths of the form $A \diedgeleft \dots \diedgeright R_Y$ that do not contain $Y^{(1)}$ as a collider. As previously noted, all such backdoor paths are blocked by ${\bf Z}$ based on conditions C3 and C4.
        \item Backdoor paths containing $Y^{(1)}$ as a collider, i.e., $A \diedgeleft \dots \diedgeright Y^{(1)} \diedgeleft \dots \diedgeright R_Y$. All such paths are still blocked despite conditioning on $Y^{(1)}$ as ${\bf Z}$ blocks all backdoor paths between $A$ and $Y^{(1)}$. 
    \end{enumerate}
    Therefore, ${\bf Z}$ is a valid backdoor adjustment set for the causal effect of $A$ on $Y^{(1)}$, fulfilling B1 and B2, and it is also a valid shadow variable adjustment set for $A$ to be a valid shadow variable, fulfilling S1 and S2 for $A$.
\end{proof}

We use Figure \ref{fig:method_example} to illustrate an example of how our tests proceed. Let us begin by only considering the solid blue edges. First, we confirm that C1 holds. Next, none of the conditions C2-C4 can be satisfied by using ${\bf Z}=\emptyset$ and $W=W_i$ for $i=1, 2, 3$. When considering singleton adjustment sets, we get that, for $W=W_3$ and ${\bf Z}=\{W_1\}$, condition C2 is satisfied. However, this set does not satisfy C3 and C4  because of the open collider at $W_1$ that introduces an open backdoor path between $A$ and $Y^{(1)}$. Finally for adjustment sets of size $2$, we get that, when $W=W_3$ and ${\bf Z} = \{W_1, W_2\}$, all conditions are satisfied, providing the correct conclusion that the effect is identified via \eqref{eq:identification} using ${\bf Z} = \{W_1, W_2\}$. 
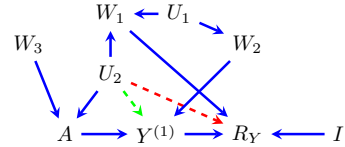
\begin{figure}[ht]
    \centering
    \scalebox{0.8}{
        \begin{tikzpicture}[>=stealth, node distance=1.5cm]
            \tikzstyle{square} = [draw, thick, minimum size=1.0mm, inner sep=3pt]
            \begin{scope}
                \path[->, very thick]
                node[] (a) {$A$}
                node[right of=a] (y) {$Y^{(1)}$}
                node[right of=y] (ry) {$R_Y$}
                node[right of=ry] (i) {$I$}
                node[above of=a, xshift=0.75cm, yshift=0.5cm] (w1) {$W_1$}
                node[right of=w1, xshift=-0.375cm] (u1) {$U_1$}
                node[above of=y, xshift=1.5cm] (w2) {$W_2$}
                node[below of=w1, yshift=0.5cm] (u2) {$U_2$}
                node[above of=a, xshift=-0.6cm] (w3) {$W_3$}
                
                (a) edge[blue] (y)
                (y) edge[blue] (ry)
                (i) edge[blue] (ry)
                (u1) edge[blue] (w1)
                (u1) edge[blue] (w2)
                (u2) edge[blue] (w1)
                (u2) edge[blue] (a)
                (u2) edge[green, dashed] (y)
                (u2) edge[red, dashed] (ry)
                (w1) edge[blue] (ry)
                (w2) edge[blue] (y)
                (w3) edge[blue] (a)
                ;
            \end{scope}
        \end{tikzpicture}
    }
    \caption{$W = W_3$, ${\bf Z} = \{W_1, W_2\}$}
    \label{fig:method_example}
\end{figure}

Next, consider the same DAG in Figure~\ref{fig:method_example} with the green dashed edge being present. Because there is unmeasured confounding between the treatment and outcome variables, there is no possible ${\bf Z}$ that can be a valid backdoor adjustment set. Therefore, C3 and C4 can never be satisfied. Next, let us consider a DAG where the red dashed edge is present. The unmeasured confounding between $A$ and $R_Y$ violates S2 when using $A$ as a shadow variable and also creates a collider path between $A$ and $I$ that ensures that C2 will never be satisfied. In both cases, our method correctly concludes that no adjustment set is possible.

\subsection{Limitations}

There exist DAGs where our identification strategy works but where our method is not able to detect the existence of a valid backdoor and shadow variable adjustment set. Consider Figure \ref{fig:method_fail} where $W_3$ is now a confounder between the treatment and outcome. Despite ${\bf Z} = \{W_1, W_2, W_3\}$ fulfilling all the critical assumptions set up in Theorem \ref{thm:identification}, our method will incorrectly conclude that there is no valid adjustment set because there exists no $W \in {\bf W}$ that can be used as an auxiliary variable to test C3 and C4. This same limitation exists in the method proposed by \cite{entner2013data} in settings without missing data.

\begin{figure}[ht]
    \centering
    \scalebox{0.8}{
        \begin{tikzpicture}[>=stealth, node distance=1.5cm]
            \tikzstyle{square} = [draw, thick, minimum size=1.0mm, inner sep=3pt]
            \begin{scope}
                \path[->, very thick]
                node[] (a) {$A$}
                node[right of=a] (y) {$Y^{(1)}$}
                node[right of=y] (ry) {$R_Y$}
                node[right of=ry] (i) {$I$}
                node[above of=a, xshift=0.75cm, yshift=0.5cm] (w1) {$W_1$}
                node[right of=w1, xshift=-0.375cm] (u1) {$U_1$}
                node[above of=y, xshift=1.5cm] (w2) {$W_2$}
                node[below of=w1, yshift=0.5cm] (u2) {$U_2$}
                node[above of=a, xshift=-0.6cm, yshift=-0.5cm] (w3) {$W_3$}
                
                (a) edge[blue] (y)
                (y) edge[blue] (ry)
                (i) edge[blue] (ry)
                (u1) edge[blue] (w1)
                (u1) edge[blue] (w2)
                (u2) edge[blue] (w1)
                (u2) edge[blue] (a)
                (w1) edge[blue] (ry)
                (w2) edge[blue] (y)
                (w3) edge[blue] (a)
                (w3) edge[blue] (y)
                ;
            \end{scope}
        \end{tikzpicture}
    }
    \caption{Setting where identification is possible using ${\bf Z} = \{W_1, W_2, W_3\}$, but there are no observed independence constraints that our method can use to verify this.}
    \label{fig:method_fail}
\end{figure}
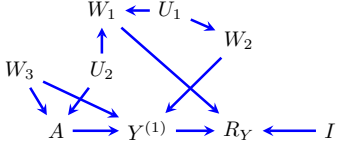

Furthermore, the incentive variable $I$ is assumed to be a randomized variable in our method. In existing observational studies, it may not always be possible to identify such a variable. However, it may be possible to relax this condition such that $I$ is conditionally randomized based on a set of pre-treatment covariates. This could potentially allow for more flexibility in the identification of a valid incentive variable, but we do not pursue this line of inquiry here.

\section{Estimation Procedure}
\label{sec:estimation}

The first step in a practical estimation procedure is to verify C1. If this test fails, then we are unable to empirically verify the validity of candidate adjustment sets using the given incentive $I$. If it succeeds, Algorithm~\ref{alg:search} then searches over all possible assignments for $W$ and ${\bf Z}$ to see if there is a combination of assignments that fulfills C2-C4. The search in Algorithm~\ref{alg:search} proceeds in a style similar to the PC algorithm \citep{spirtes2000causation} for causal discovery (and the order of tests described for Figure~\ref{fig:method_example}) where conditioning sets of size $0$ are tested first, followed by sets of size $1$, and so on. Such an exponential search is unavoidable in situations where no valid adjustment set exists.\footnote{If desired, one may also perform a search for candidate adjustment sets using only a subset of all possible subsets of ${\bf W}$.} We use likelihood ratio tests for all tests in Algorithm~\ref{alg:search}; however, suitable non-parametric tests such as kernel conditional independence tests can also be applied \citep{zhang2011kernel}. If Algorithm~\ref{alg:search} returns a set ${\bf Z}$, then we have found a set satisfying B1, B2, S1, and S2, and we may use $A$ as a valid shadow variable and ${\bf Z}$ as a valid  adjustment set. Otherwise, we conclude that no adjustment set could be validated using our tests.
\begin{algorithm}[t]
\caption{for finding a valid adjustment set {\bf Z}.}\label{alg:search}
\begin{algorithmic}[1]
\For{each $W \in {\bf W}$}
    \State ${\bf Z}_f \gets {\bf W} \setminus \{W\}$
    \For{$i$ from $0$ to $|{\bf Z}_f|$}
        \State $\mathbb{Z}_s$ $\gets $ all possible subsets of ${\bf Z}_f$ with size $i$
        \For{each ${\bf Z} \in \mathbb{Z}_s$}
            \If{C2-C4 are true using $W$ and {\bf Z}}
                \State return ${\bf Z}$
            \EndIf
        \EndFor
    \EndFor
\EndFor
\State return ``no adjustment set found''
\end{algorithmic}
\end{algorithm}

If we find a valid set {\bf Z}, we then estimate the distribution $p(A \mid {\bf Z})$ using the full dataset as it involves only fully observed variables. As the treatment is typically binary, we use a logistic regression model in our procedure. More flexible models such as generalized additive models or random forests are also possible depending on the sample size.

Next, we estimate the propensity score for $R_Y$, $p(R_Y=1 \mid Y^{(1)}, {\bf Z})$. Let the cardinality of ${\bf Z}$ be $|{\bf Z}|=k$. In our estimation procedure, we use the odds ratio factorization of the propensity score in \eqref{eq:or_factorization} with the parameterizations $\pi_0({\bf Z})=\text{expit}(\beta_1 Z_1+\beta_2 Z_2 \dots \beta_k Z_k)$ and $\eta(Y^{(1)},{\bf Z})=\text{exp}(\gamma Y^{(1)})$\footnote{Although we use specific parameterizations for the propensity score and odds ratio in our estimation procedure, our identification strategy is non-parametric. Different strategies for estimating the odds ratio is given by \cite{tchetgen2010doubly}.}. In total, we estimate $k+1$ parameters to recover the propensity score of $R_Y$; therefore, we need $k+1$ mean-zero estimating equations of the form $\E[(\frac{R_Y}{p(R_Y=1 \mid Y^{(1)}, {\bf Z}; {\boldsymbol \beta}, \gamma)}-1) \times h(A, {\bf Z})]$. A simple choice for the first $k$ equations is to use $h(A, {\bf Z})=Z_i$ for each $Z_i \in {\bf Z}$. For the final equation, we use $h(A, {\bf Z}) = \overline{A}$, where $\overline{A}$ is the mean of the variable $A$. This gives us the following system of equations:
\begin{align}
    \E \begin{bmatrix}\bigg(\frac{R_Y}{p(R_Y=1 \mid Y^{(1)}, {\bf Z}; {\boldsymbol \beta}, \gamma)} - 1\bigg) \begin{bmatrix} Z_1 \\ Z_2 \\ \dots \\ Z_k \\ \overline{A} \end{bmatrix} \end{bmatrix} &= 0
    \label{eq:shadow_equations}
\end{align}
Because the expression $\frac{R_Y}{p(R_Y=1 \mid Y^{(1)}, {\bf Z}; {\boldsymbol \beta}, \gamma)}$ is $0$ for each individual in the dataset where $R_Y=0$, the system of estimating equations only uses observed rows of $Y^{(1)}$.

Using the estimated parameters for $p(A \mid {\bf Z})$ and $p(R_Y=1 \mid Y^{(1)}, {\bf Z})$, we get estimates of both propensity scores for each row of data. Since inverse probability weighting estimators can be unstable due to large weights, we clip these propensity scores between values $p_{low}=0.01$ and $p_{high}=0.99$ \citep{hernan2010causal, crump2009dealing}. Finally, we estimate the causal effect by taking the empirical average for the expectation shown in identifying functional \eqref{eq:identification} for treatment values $a$ and $a'$.

A summary of our procedure is as follows: (i) Test C1, if it holds, proceed to (ii), else, terminate the algorithm. (ii) Test C2-C4 using Algorithm~\ref{alg:search}, if it returns a set ${\bf Z}$, proceed to (iii), else terminate the algorithm. (iii) Estimate the propensity scores for $R_Y$ and $A$ and plug them into an inverse probability weighted estimator for the ACE based on the identifying functional in \eqref{eq:identification}.

\subsubsection*{Complexity of the Search Procedure}

The worst-case computational complexity of the proposed method is exponential due to the number of subsets that are considered when testing conditions C1 through C4. However, in practice, the complexity may not necessarily pose any significant challenges to applying this method. According to a meta study that evaluated studies in applied health research that used DAGs, most of these studies only use DAGs with roughly twelve variables \citep{tennant2021use}. DAGs of this size and those of similar sizes should pose no computational issues for our method. For high-dimensional settings, researchers typically rely on some sparsity assumptions such as limiting the maximum size of the conditioning set (as in causal discovery applications) or, alternatively, finding a low-dimensional representation of the high-dimensional confounders that is sufficient for adjustment \citep{ma2019robust}. It is possible to apply such methods in conjunction with ours to deal with self-censoring in the high-dimensional case.

\section{Simulation Study}
\label{sec:experiments}

For our simulations, we generate data according to the graph shown in Figure \ref{fig:experiment_graph} and modifications of it that violate the shadow variable or backdoor conditions. We generate the pre-treatment covariates ${\bf W}$ from a multivariate normal distribution with mean ${\bf 0}$ and covariance matrix $\mathbf{\Sigma}$ such that the off-diagonal entries, corresponding to the covariance of error terms, are non-zero for the pairs $(W_2, W_3)$, $(W_2, W_4)$, and $(W_3, W_4)$.
This is equivalent to a structural equation model with correlated errors where we have unmeasured confounding of the form $W_2 \biedge W_3$, $W_2 \biedge W_4$, and $W_3 \biedge W_4$.
We generate $A, Y^{(1)}$, and $R_Y$ as binary variables as functions of their parents in Figure \ref{fig:experiment_graph}. To generate $R_Y$, we use the odds ratio factorization and parameterization specified in Section~\ref{sec:estimation}. The incentive variable $I$ is normally distributed with a mean of $0$ and variance $2$. 
Precise details of the data generating process are provided in the Appendix.
\begin{figure}[ht]
    \centering
    \scalebox{0.8}{
	    \begin{tikzpicture}[>=stealth, node distance=1.5cm]
			\tikzstyle{square} = [draw, thick, minimum size=1.0mm, inner sep=3pt]
			\begin{scope}
				\path[->, very thick]
				node[] (a) {$A$}
				node[right of=a] (y) {$Y^{(1)}$}
				node[right of=y] (ry) {$R_Y$}
                node[above of=a, xshift=-0.75cm] (w1) {$W_1$}
                node[right of=w1] (w2) {$W_2$}
                node[right of=w2] (w3) {$W_3$}
                node[right of=w3] (w4) {$W_4$}
                node[right of=ry] (i) {$I$}
				
				(a) edge[blue] (y)
                (y) edge[blue] (ry)
                (w1) edge[blue] (a)
                (w2) edge[blue] (a)
                (w2) edge[blue] (y)
                (w2) edge[blue] (ry)
                (w3) edge[blue] (a)
                (w3) edge[blue] (y)
                (w3) edge[blue] (ry)
                (w4) edge[blue] (a)
                (w4) edge[blue] (y)
                (w4) edge[blue] (ry)
                (w2) edge[red, <->, bend left] (w3)
                (w3) edge[red, <->, bend left] (w4)
                (w2) edge[red, <->, bend left, in=125, out=55] (w4)
                (i) edge[blue] (ry)
				;
			\end{scope}
		\end{tikzpicture}
    }
    \caption{Graph used in our simulations.}
    \label{fig:experiment_graph}
\end{figure}
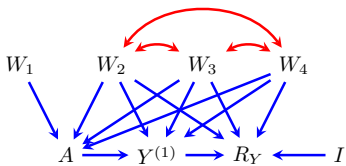

Our first set of experiments focuses on evaluating the effectiveness of Algorithm~\ref{alg:search} for finding a valid adjustment set when such a set exists and to correctly identify that no valid adjustment set exists when no such set exists. Before describing the experiments, we define key terms for evaluating accuracy. A {\it true positive} (TP) occurs when an adjustment set exists and the algorithm identifies the correct set of covariates. If the algorithm does not find an adjustment set or returns an incorrect one, then this is a {\it false negative} (FN). A {\it true negative} (TN) occurs when no possible adjustment set exists and the algorithm correctly finds no adjustment set. If the algorithm detects an adjustment set when no such set exists, this is considered  a {\it false positive} (FP). {\it Sensitivity} is defined as $\frac{\# \text{TP}}{\# \text{TP} + \# \text{FN}}$, and {\it specificity} is defined as $\frac{\# \text{TN}}{\# \text{TN} + \# \text{FP}}$.

The experiment proceeds as follows. We first run 200 trials with data generated according to Figure~\ref{fig:experiment_graph} where Algorithm~\ref{alg:search} should return ${\bf Z} = \{W_2, W_3, W_4\}$. We then run 200 trials where the algorithm should return no adjustment set as no valid set exists using data generated according to a DAG where we add edge $A \diedgeright R_Y$ to Figure~\ref{fig:experiment_graph} with probability $0.5$ or where we treat $W_4$ as a latent variable with probability $0.5$. The size of the dataset for each of these trials ranges from 500 to 10,000, and we use a significance level of $\alpha=0.05$ for our tests. However, note that the effective sample size for some tests is roughly $60\%$ of the full sample size due to the missingness of the outcome. Table \ref{tab:covariate_search} summarizes the results. In the Appendix, we also report results of this experiment using $\alpha=0.01$ and $\alpha=0.1$.
\begin{table}[ht]
    \centering
    \begin{tabular}{|c|c|c|}
    \hline
    \begin{tabular}[c]{@{}c@{}}{\bf Sample} {\bf Size}\end{tabular} & \multicolumn{1}{l|}{{\bf Sensitivity}} & {\bf Specificity} \\ \hline
    500   & 0.011 & 0.350 \\ \hline
    2500  & 0.577 & 0.556 \\ \hline
    5000  & 0.812 & 0.818 \\ \hline
    10000 & 0.930 & 0.925 \\ \hline
    \end{tabular}
    \caption{Results of covariate search experiment.}
    \label{tab:covariate_search}
\end{table}

The second set of experiments uses data from Figure~\ref{fig:experiment_graph} to evaluate the effectiveness of our method for downstream estimation. We compare these estimates to the bias introduced by either failing to adjust for missing data but using the correct backdoor adjustment set, or by failing to use a valid backdoor adjustment set but correctly adjusting for missing data. To estimate the ACE without adjusting for missing data, we subset the dataset to only rows of data where the outcome is observed and use a standard inverse probability weighting estimator. To estimate the ACE while using an invalid backdoor adjustment set, we use a subset of the correct adjustment set, $\{W_2, W_3\}$, after correctly adjusting for missing data. We compare these two estimates with one obtained by running our full procedure described in Section~\ref{sec:estimation} with $\alpha=0.05$. We also generate estimates obtained using an independence oracle in Algorithm~\ref{alg:search} to emphasize the importance of reliable conditional independence tests. We run 200 trials each for the sample sizes 500, 2,500, 5,000, and 10,000.

\begin{figure*}[ht]
    \centering
    \begin{subfigure}[b]{0.33\textwidth}
        \centering
        \includegraphics[scale=0.33]{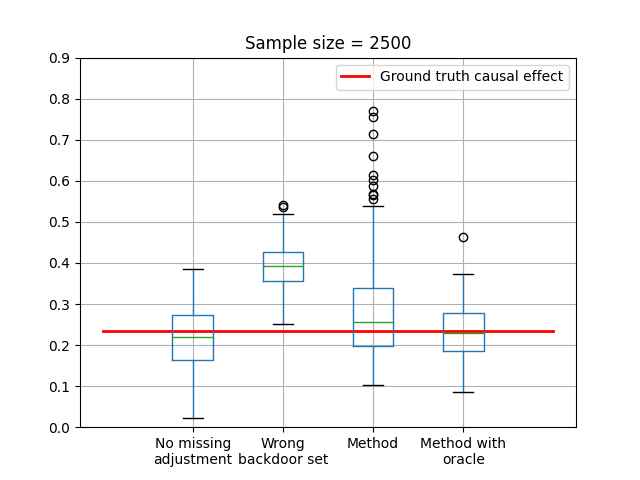}
        \caption{}
    \end{subfigure}
    \begin{subfigure}[b]{0.33\textwidth}
        \centering
        \includegraphics[scale=0.33]{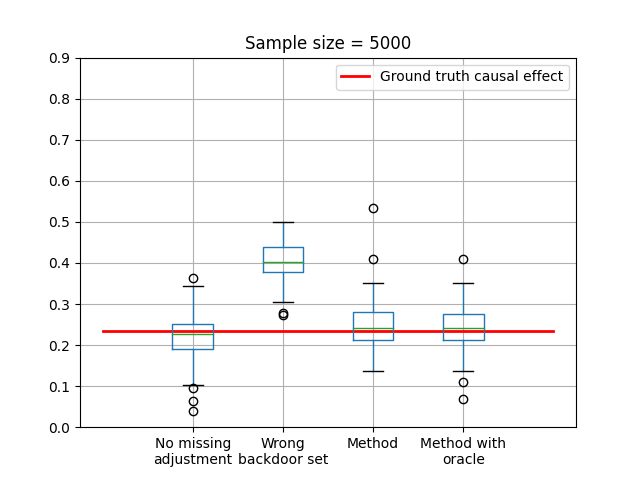}
        \caption{}
    \end{subfigure}
    \begin{subfigure}[b]{0.33\textwidth}
        \centering
        \includegraphics[scale=0.33]{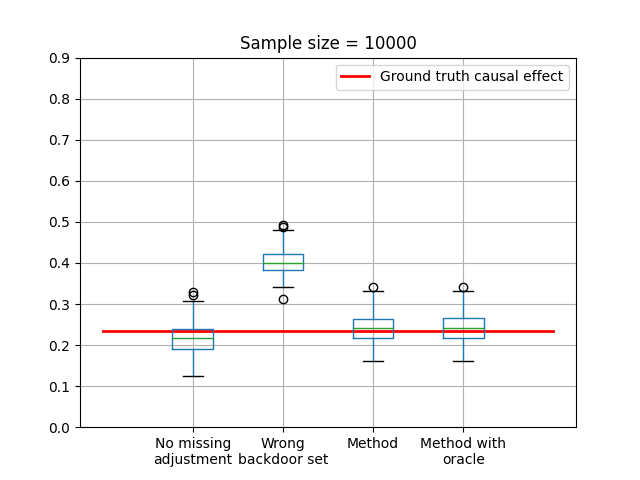}
        \caption{}
    \end{subfigure}
    \caption{Simulation results for estimating the average causal effect using different methods and sample sizes.}
    \label{fig:experiments}
\end{figure*}

Results for 500 samples exhibit high bias due to inaccuracy of the tests and are reported in the Appendix in the interest of space. The red line in Figure \ref{fig:experiments} shows the ground truth causal effect. At sample size 10,000, the search algorithm correctly identifies the adjustment set for $93\%$ of the trials, so the estimates from the method are nearly identical to the estimates from using the independence oracle. As the sample size increases, estimation from failing to adjust for missing data and confounding bias converge to biased values for the ACE. The method that ignores missingness produces reasonable estimates at low sample sizes, but it shows asymptotic convergence to a biased estimate. As expected, estimates obtained by using our full pipeline converge to the ground truth causal effect as sample size increases. Python code implementing our estimation procedure and to reproduce our numerical studies can be found online: \url{https://github.com/jacobmchen/mnar-recoverability}.

\section{Conclusion}
\label{sec:conclusion}

In this paper, we discuss methods for causal effect estimation with a self-censoring outcome and when the underlying causal structure of the graph is unknown. We prove that when a set ${\bf Z} \subset {\bf W}$ satisfies both the backdoor adjustment set and the shadow variable adjustment set criteria, then identification of the average causal effect is possible through an inverse probability weighting functional. We further describe a series of tests that may be used to empirically identify such a valid set ${\bf Z}$ using the observed data. We present a simple search algorithm that uses our tests as a subroutine and estimates the average causal effect using an inverse probability weighting estimator if a valid set ${\bf Z}$ is found. Finally, we conclude with experiments based on synthetic data that demonstrate the accuracy of our search algorithm and estimation procedure.

We have thus made progress on identifying situations where it is possible to overcome self-censoring and confounding to compute an unbiased estimate for the average causal effect. To the best of our knowledge thus far in the literature, all methods for covariate selection under MNAR data do not allow for self-censoring on the outcome and require prior knowledge of the underlying causal structure.

As self-censoring is a difficult problem, however, many open questions remain: To what extent can we relax the assumption that the incentive $I$ must be randomized? Can we design semiparametric estimation strategies for the tests and final estimation piece that exhibit desirable statistical properties, such as robustness to model mispecification and lower asymptotic variance? Is it possible to apply different identification strategies, such as frontdoor adjustment proposed by \cite{pearl1995causal}, to identify the average causal effect under self-censoring of the outcome when backdoor adjustment is not applicable? These issues and open questions may be the focus of future research.


\begin{acknowledgements}
DM was partially supported by the National Institutes of Health under award number K25ES034064 from NIEHS.  
\end{acknowledgements}

\clearpage

\bibliography{references}

\begin{thebibliography}{31}
\providecommand{\natexlab}[1]{#1}
\providecommand{\url}[1]{\texttt{#1}}
\expandafter\ifx\csname urlstyle\endcsname\relax
  \providecommand{\doi}[1]{doi: #1}\else
  \providecommand{\doi}{doi: \begingroup \urlstyle{rm}\Url}\fi

\bibitem[Bhattacharya et~al.(2019)Bhattacharya, Nabi, Shpitser, and
  Robins]{bhattacharya2019mid}
Rohit Bhattacharya, Razieh Nabi, Ilya Shpitser, and James~M. Robins.
\newblock Identification in missing data models represented by directed acyclic
  graphs.
\newblock In \emph{Proceedings of the 35th Conference on Uncertainty in
  Artificial Intelligence}. AUAI Press, 2019.

\bibitem[Chen(2007)]{chen2007semiparametric}
Hua~Yun Chen.
\newblock A semiparametric odds ratio model for measuring association.
\newblock \emph{Biometrics}, 63:\penalty0 413--421, 2007.

\bibitem[Crump et~al.(2009)Crump, Hotz, Imbens, and Mitnik]{crump2009dealing}
Richard~K. Crump, V.~Joseph Hotz, Guido~W. Imbens, and Oscar~A. Mitnik.
\newblock Dealing with limited overlap in estimation of average treatment
  effects.
\newblock \emph{Biometrika}, 96\penalty0 (1):\penalty0 187--199, 2009.

\bibitem[Daniel et~al.(2012)Daniel, Kenward, Cousens, and
  De~Stavola]{daniel2012using}
Rhian~M. Daniel, Michael~G. Kenward, Simon~N. Cousens, and Bianca~L.
  De~Stavola.
\newblock Using causal diagrams to guide analysis in missing data problems.
\newblock \emph{Statistical Methods in Medical Research}, 21\penalty0
  (3):\penalty0 243--256, 2012.

\bibitem[Duarte et~al.(2021)Duarte, Finkelstein, Knox, Mummolo, and
  Shpitser]{duarte2021automated}
Guilherme Duarte, Noam Finkelstein, Dean Knox, Jonathan Mummolo, and Ilya
  Shpitser.
\newblock An automated approach to causal inference in discrete settings.
\newblock \emph{arXiv preprint arXiv:2109.13471}, 2021.

\bibitem[d’Haultfoeuille(2010)]{d2010new}
Xavier d’Haultfoeuille.
\newblock A new instrumental method for dealing with endogenous selection.
\newblock \emph{Journal of Econometrics}, 154\penalty0 (1):\penalty0 1--15,
  2010.

\bibitem[Entner et~al.(2013)Entner, Hoyer, and Spirtes]{entner2013data}
Doris Entner, Patrik Hoyer, and Peter Spirtes.
\newblock Data-driven covariate selection for nonparametric estimation of
  causal effects.
\newblock In \emph{Artificial Intelligence and Statistics}, pages 256--264.
  PMLR, 2013.

\bibitem[Hern{\'a}n and Robins(2010)]{hernan2010causal}
Miguel~A. Hern{\'a}n and James~M. Robins.
\newblock \emph{Causal Inference: What If}.
\newblock CRC Boca Raton, FL, 2010.

\bibitem[Ma et~al.(2019)Ma, Zhu, Zhang, Tsai, and Carroll]{ma2019robust}
Shujie Ma, Liping Zhu, Zhiwei Zhang, Chih-Ling Tsai, and Raymond~J Carroll.
\newblock A robust and efficient approach to causal inference based on sparse
  sufficient dimension reduction.
\newblock \emph{Annals of statistics}, 47\penalty0 (3):\penalty0 1505, 2019.

\bibitem[Miao et~al.(2015)Miao, Liu, Tchetgen, and
  Geng]{miao2015identification}
Wang Miao, Lan Liu, Eric~Tchetgen Tchetgen, and Zhi Geng.
\newblock Identification, doubly robust estimation, and semiparametric
  efficiency theory of nonignorable missing data with a shadow variable.
\newblock \emph{arXiv preprint arXiv:1509.02556}, 2015.

\bibitem[Mohan and Pearl(2021)]{mohan2021graphical}
Karthika Mohan and Judea Pearl.
\newblock Graphical models for processing missing data.
\newblock \emph{Journal of the American Statistical Association}, 116\penalty0
  (534):\penalty0 1023--1037, 2021.

\bibitem[Mohan et~al.(2013)Mohan, Pearl, and Tian]{mohan2013missing}
Karthika Mohan, Judea Pearl, and Jin Tian.
\newblock Graphical models for inference with missing data.
\newblock \emph{Advances in neural information processing systems}, 26, 2013.

\bibitem[Mohan et~al.(2018)Mohan, Thoemmes, and Pearl]{mohan2018estimation}
Karthika Mohan, Felix Thoemmes, and Judea Pearl.
\newblock Estimation with incomplete data: The linear case.
\newblock In \emph{Proceedings of the International Joint Conferences on
  Artificial Intelligence Organization}, 2018.

\bibitem[Nabi and Bhattacharya(2022)]{nabi2022testability}
Razieh Nabi and Rohit Bhattacharya.
\newblock On testability and goodness of fit tests in missing data models.
\newblock \emph{arXiv preprint arXiv:2203.00132}, 2022.

\bibitem[Nabi et~al.(2020)Nabi, Bhattacharya, and Shpitser]{nabi2020mid}
Razieh Nabi, Rohit Bhattacharya, and Ilya Shpitser.
\newblock Full law identification in graphical models of missing data:
  Completeness results.
\newblock In \emph{Proceedings of the 37th International Conference on Machine
  Learning}, pages 7153--7163. PMLR, 2020.

\bibitem[Nabi et~al.(2022)Nabi, Bhattacharya, Shpitser, and
  Robins]{nabi2022causal}
Razieh Nabi, Rohit Bhattacharya, Ilya Shpitser, and James Robins.
\newblock Causal and counterfactual views of missing data models.
\newblock \emph{arXiv preprint arXiv:2210.05558}, 2022.

\bibitem[Newey and Powell(2003)]{newey2003instrumental}
Whitney~K. Newey and James~L. Powell.
\newblock Instrumental variable estimation of nonparametric models.
\newblock \emph{Econometrica}, 71\penalty0 (5):\penalty0 1565--1578, 2003.

\bibitem[Pearl(1988)]{pearl1988probabilistic}
Judea Pearl.
\newblock \emph{Probabilistic reasoning in intelligent systems: networks of
  plausible inference}.
\newblock Morgan Kaufmann, 1988.

\bibitem[Pearl(1995)]{pearl1995causal}
Judea Pearl.
\newblock Causal diagrams for empirical research.
\newblock \emph{Biometrika}, 82\penalty0 (4):\penalty0 669--688, 1995.

\bibitem[Pearl(2009)]{pearl2009causality}
Judea Pearl.
\newblock \emph{Causality}.
\newblock Cambridge University Press, 2009.

\bibitem[Saadati and Tian(2019)]{saadati2019adjustment}
Mojdeh Saadati and Jin Tian.
\newblock Adjustment criteria for recovering causal effects from missing data.
\newblock In \emph{Joint European Conference on Machine Learning and Knowledge
  Discovery in Databases}, pages 561--577. Springer, 2019.

\bibitem[Shpitser et~al.(2012)Shpitser, VanderWeele, and
  Robins]{shpitser2012validity}
Ilya Shpitser, Tyler VanderWeele, and James~M Robins.
\newblock On the validity of covariate adjustment for estimating causal
  effects.
\newblock \emph{arXiv preprint arXiv:1203.3515}, 2012.

\bibitem[Spirtes et~al.(2000)Spirtes, Glymour, and
  Scheines]{spirtes2000causation}
Peter~L. Spirtes, Clark~N. Glymour, and Richard Scheines.
\newblock \emph{Causation, prediction, and search}.
\newblock MIT Press, 2000.

\bibitem[Sportisse et~al.(2020)Sportisse, Boyer, and
  Josse]{sportisse2020estimation}
Aude Sportisse, Claire Boyer, and Julie Josse.
\newblock Estimation and imputation in probabilistic principal component
  analysis with missing not at random data.
\newblock \emph{Advances in Neural Information Processing Systems},
  33:\penalty0 7067--7077, 2020.

\bibitem[Tchetgen~Tchetgen and Wirth(2017)]{tchetgen2017general}
Eric~J. Tchetgen~Tchetgen and Kathleen~E. Wirth.
\newblock A general instrumental variable framework for regression analysis
  with outcome missing not at random.
\newblock \emph{Biometrics}, 73\penalty0 (4):\penalty0 1123--1131, 2017.

\bibitem[Tchetgen~Tchetgen et~al.(2010)Tchetgen~Tchetgen, Robins, and
  Rotnitzky]{tchetgen2010doubly}
Eric~J. Tchetgen~Tchetgen, James~M. Robins, and Andrea Rotnitzky.
\newblock On doubly robust estimation in a semiparametric odds ratio model.
\newblock \emph{Biometrika}, 97\penalty0 (1):\penalty0 171--180, 2010.

\bibitem[Tennant et~al.(2021)Tennant, Murray, Arnold, Berrie, Fox, Gadd,
  Harrison, Keeble, Ranker, Textor, Tomova, Gilthorpe, and
  Ellison]{tennant2021use}
Peter W.~G. Tennant, Eleanor~J. Murray, Kellyn~F. Arnold, Laurie Berrie,
  Matthew~P. Fox, Sarah~C. Gadd, Wendy~J. Harrison, Claire Keeble, Lynsie~R.
  Ranker, Johannes Textor, Georgia~D. Tomova, Mark~S. Gilthorpe, and George
  T.~H. Ellison.
\newblock Use of directed acyclic graphs {(DAGs)} to identify confounders in
  applied health research: review and recommendations.
\newblock \emph{International Journal of Epidemiology}, 50\penalty0
  (2):\penalty0 620--632, 2021.

\bibitem[Tu et~al.(2019)Tu, Zhang, Ackermann, Mohan, Kjellstr{\"o}m, and
  Zhang]{tu2019causal}
Ruibo Tu, Cheng Zhang, Paul Ackermann, Karthika Mohan, Hedvig Kjellstr{\"o}m,
  and Kun Zhang.
\newblock Causal discovery in the presence of missing data.
\newblock In \emph{The 22nd International Conference on Artificial Intelligence
  and Statistics}, pages 1762--1770. PMLR, 2019.

\bibitem[Turner et~al.(2009)Turner, Al-Tayyib, Rogers, Eggleston, Villarroel,
  Roman, Chromy, and Cooley]{turner2009improving}
Charles~F. Turner, Alia Al-Tayyib, Susan~M Rogers, Elizabeth Eggleston,
  Maria~A. Villarroel, Anthony~M. Roman, James~R. Chromy, and Phillip~C.
  Cooley.
\newblock Improving epidemiological surveys of sexual behaviour conducted by
  telephone.
\newblock \emph{International Journal of Epidemiology}, 38\penalty0
  (4):\penalty0 1118--1127, 2009.

\bibitem[Yang et~al.(2019)Yang, Wang, and Ding]{yang2019causal}
Shu Yang, Linbo Wang, and Peng Ding.
\newblock Causal inference with confounders missing not at random.
\newblock \emph{Biometrika}, 106\penalty0 (4):\penalty0 875--888, 2019.

\bibitem[Zhang et~al.(2011)Zhang, Peters, Janzing, and
  Sch{\"o}lkopf]{zhang2011kernel}
Kun Zhang, Jonas Peters, Dominik Janzing, and Bernhard Sch{\"o}lkopf.
\newblock Kernel-based conditional independence test and application in causal
  discovery.
\newblock In \emph{Proceedings of the Twenty-Seventh Conference on Uncertainty
  in Artificial Intelligence}, pages 804--813, 2011.

\end{thebibliography}

\clearpage

\appendix

\onecolumn

{\Large \bf APPENDIX}

\section{Specifics of Data Generating Process}

For our simulations, we generate data according to the graph shown in Figure \ref{fig:experiment_graph} and modifications of it that violate the shadow variable or backdoor conditions. We generate the pre-treatment covariates ${\bf W}$ from a multivariate normal distribution with mean {\bf 0} and covariance matrix $\mathbf{\Sigma} =$
\begin{align*}
\begin{bmatrix}
    1.2 & 0 & 0 & 0\\
    0 & 1 & 0.4 & 0.4\\
    0 & 0.4 & 1 & 0.3\\
    0 & 0.4 & 0.3 & 1
    \end{bmatrix}.
\end{align*}
The above data generating process is equivalent to a structural equation model with correlated errors due to unmeasured confounders between the pairs $(W_2, W_3)$, $(W_2, W_4)$, and $(W_3, W_4)$.   We generate $A, Y^{(1)}, I$, and $R_Y$ according to structural equation models following edges in Figure \ref{fig:experiment_graph}. Note that we also clip all probabilities to be between the ranges of $0.01$ and $0.99$. 

We generate $A$ as a binary variable with the following probabilities:
\begin{align*}
    p(A=1 \mid W_1, W_2, W_3, W_4) &= \text{expit}(0.52 + 2*W_1 + 2*W_2 + 2*W_3 + 2*W_4) \\
    p(A=0 \mid W_1, W_2, W_3, W_4) &= 1-p(A=1 \mid W_1, W_2, W_3, W_4)
\end{align*}
Next, $Y^{(1)}$ is generated similarly with the following probabilities:
\begin{align*}
    p(Y^{(1)}=1 \mid A, W_2, W_3, W_4) &= \text{expit}(3*A + 2*W_2 + 2*W_3 + 2*W_4) \\
    p(Y^{(1)}=0 \mid A, W_2, W_3, W_4) &= 1-p(Y^{(1)}=1 \mid A, W_2, W_3, W_4)
\end{align*}
The variable $I$ is simply a random normal variable with mean $0$ and variance $2$, i.e. $I \sim \mathcal{N}(0, 2)$.

We use the odds ratio parameterization to generate $R_Y$ with the following two probabilities. We first specify $p(R_Y=1 \mid Y^{(1)}=0, {\bf W} \setminus \{W_1\}, I)$, which represents the probability of $R_Y=1$ when $Y^{(1)}$ is at its chosen reference value of $0$. We then use that probability to generate $p(R_Y=1 \mid Y^{(1)}, {\bf W} \setminus \{W_1\}, I)$ at all values of $Y^{(1)}$.
\begin{align*}
    p(R_Y=1 \mid Y^{(1)}=0, {\bf W} \setminus \{W_1\}, I) &= \text{expit}(W_2 + W_3 + W_4 + 0.5*I)
\end{align*}
\begin{align*}
    p(R_Y&=1 \mid Y^{(1)}, {\bf W} \setminus \{W_1\}, I) = \\
    &\frac{p(R_Y=1 \mid Y^{(1)}=0, {\bf W} \setminus \{W_1\}, I)}{p(R_Y=1 \mid Y^{(1)}=0, {\bf W} \setminus \{W_1\}, I) + \text{exp}(-1.5*Y^{(1)}) \times (1-p(R_Y=1 \mid Y^{(1)}=0, {\bf W} \setminus \{W_1\}, I))}.
\end{align*}

In the case where we add $A \diedgeright R_Y$ to Figure~\ref{fig:experiment_graph}, we add $1.5*A$ in the expit function for $p(R_Y=1 \mid Y^{(1)}=0, \cdot)$.

\clearpage

\section{Proof of Theorem \ref{thm:identification}}

We first note that the presence of $R_Y$ in the numerator ensures that we only use observed rows of data. Further, the propensity scores in the denominator are identified: $p(A\mid {\bf Z})$ only depends on observed quantities, and $p(R_Y=1\mid A, Y^{(1)}, {\bf Z}) = p(R_Y=1\mid Y^{(1)}, {\bf Z})$ is identified using S1, S2, and the completeness condition. We now prove that the proposed identifying functional is equal to the backdoor adjustment functional and counterfactual mean under the full data law.

\begin{proof}
\begin{align*}
    & \E\bigg[\frac{R_Y \times \mathbb{I}(A=a) \times Y}{p(R_Y=1 \mid Y^{(1)}, {\bf Z}) \times p(A=a \mid {\bf Z})} \bigg] \\
    &=^{(1)} \sum_{R_Y, Y^{(1)}, A, {\bf Z}, Y} p(R_Y, Y^{(1)}, A, {\bf Z}, Y) \times \frac{R_Y \times \mathbb{I}(A=a) \times Y}{p(R_Y=1 \mid Y^{(1)}, {\bf Z}) \times p(A=a \mid {\bf Z})} \\
    &=^{(2)} \sum_{Y^{(1)}, A, {\bf Z}} p(R_Y=1, Y^{(1)}, A, {\bf Z}) \times \frac{\mathbb{I}(A=a) \times Y^{(1)}}{p(R_Y=1 \mid Y^{(1)}, {\bf Z}) \times p(A \mid {\bf Z})} \\
    &=^{(3)} \sum_{Y^{(1)}, A, {\bf Z}} p(R_Y=1 \mid Y^{(1)}, A, {\bf Z}) p(Y^{(1)} \mid A, {\bf Z}) p(A \mid {\bf Z}) p({\bf Z}) \times \frac{\mathbb{I}(A=a) \times Y^{(1)}}{p(R_Y=1 \mid Y^{(1)}, {\bf Z}) \times p(A=a \mid {\bf Z})} \\
    &=^{(4)} \sum_{Y^{(1)}, {\bf Z}} p(R_Y=1 \mid Y^{(1)}, {\bf Z}) p(Y^{(1)} \mid A=a, {\bf Z}) p(A=a \mid {\bf Z}) p({\bf Z}) \times \frac{Y^{(1)}}{p(R_Y=1 \mid Y^{(1)}, {\bf Z}) \times p(A=a \mid {\bf Z})} \\
    &=^{(5)} \sum_{Y^{(1)}, {\bf Z}} p(Y^{(1)} \mid A=a, {\bf Z}) \times p({\bf Z}) \times Y^{(1)} \\
    &=^{(6)} \sum_{{\bf Z}} \E[Y^{(1)} \mid A=a, {\bf Z}] \times p({\bf Z}) =^{(7)} \E[Y^{(a, 1)}].
\end{align*}
In (1) we apply the law of the unconscious statistician; in (2) we evaluate the sum over $R_Y$ and use missing data consistency; in (3) we apply the chain rule of probability; in (4) we evaluate the sum over $A$ and drop $A$ from the propensity score of $R_Y$ due to condition S2; (5) follows from cancellation of common terms in the numerator and denominator; (6) follows from definition of expectation; the last step (7) follows from the fact that ${\bf Z}$ satisfies the backdoor conditions B1 and B2.
\end{proof}

\section{Proof of Equation \ref{eq:or_factorization}}

For completeness, we provide a proof for odds ratio parameterization of the propensity score in \eqref{eq:or_factorization}. First, from \cite{chen2007semiparametric} we have an odds ratio factorization of the joint distribution $p(R_Y, Y^{(1)} \mid {\bf Z})$ as follows,
\begin{align}
    p(R_Y, Y^{(1)} \mid {\bf Z}) &= \frac{p(R_Y \mid Y^{(1)}=y_0, {\bf Z}) \times p(Y^{(1)} \mid R_Y=1, {\bf Z}) \times \odds(Y^{(1)}, R_Y \mid {\bf Z})}{\sum_{R_Y, Y^{(1)}} p(R_Y \mid Y^{(1)}=y_0, {\bf Z}) \times p(Y^{(1)} \mid R_Y=1, {\bf Z}) \times \odds(Y^{(1)}, R_Y \mid {\bf Z})},
    \label{eq:chen_or_factorization}
\end{align}
where $y_0$ is a reference value for $Y^{(1)}$ and $1$ is the reference value for $R_Y$, and the denominator of is a normalizing function. Let $\psi = p(R_Y \mid Y^{(1)}=y_0, {\bf Z}) \times p(Y^{(1)} \mid R_Y=1, {\bf Z}) \times \odds(Y^{(1)}, R_Y \mid {\bf Z})$ and $\psi_1 = \left. \psi \right|_{R_Y=1}$. As in the main section of the paper, $\pi_0 \coloneqq p(R_Y =1 \mid Y^{(1)} = y_0, {\bf Z})$ and $\eta(Y^{(1)}, {\bf Z}) \coloneqq \odds(R_Y=0, Y^{(1)} \mid {\bf Z})$. We present the proof and an explanation of each step below.

(1) and (2) follow from standard laws of probability.
In (3), we apply the odds ratio factorization in \eqref{eq:chen_or_factorization} to both the numerator and the denominator. In (4), we cancel out like terms in both the numerator and denominator. In (5), we simply expand out $\psi_1$ and $\psi$ according to our previous definitions of these two terms. In (6), we note that $\odds(Y^{(1)}, R_Y=1 \mid {\bf Z})$ has $R_Y$ at its reference value of $1$; hence, it is equal to $1$. Further, we move the term $p(Y^{(1)} \mid R_Y=1, {\bf Z})$ outside of the sum in the denominator because this term is not a function of $R_Y$. In (7), we cancel out like terms from the numerator and denominator. Finally, in (8), we explicitly write out the sum over $R_Y$, which has only two possible values. When $R_Y=1$, $R_Y$ is at its reference value in the odds ratio, so the odds ratio term disappears, and we are just left with $\pi_0({\bf Z})$. When $R_Y=0$, we know that $p(R_Y=0 \mid Y^{(1)}=y_0, {\bf Z}) = 1-\pi_0({\bf Z})$ and that neither $Y^{(1)}$ nor $R_Y$ are at their reference values in the odds ratio term. Therefore, the odds ratio term remains. 

\begin{proof}
    \begin{align*}
        p(R_Y=1 \mid Y^{(1)}, {\bf Z}) &=^{(1)} \frac{p(R_Y=1, Y^{(1)} \mid {\bf Z})}{p(Y^{(1)} \mid {\bf Z})} \\
        &=^{(2)} \frac{p(R_Y=1, Y^{(1)} \mid {\bf Z})}{\sum_{R_Y} p(R_Y, Y^{(1)} \mid {\bf Z})} \\
        &=^{(3)} \frac{\frac{\psi_1}{\sum_{R_Y, Y^{(1)}} \psi}}{\frac{\sum_{R_Y} \psi}{\sum_{R_Y, Y^{(1)}} \psi}} \\
        &=^{(4)} \frac{\psi_1}{\sum_{R_Y} \psi} \\
        &=^{(5)} \frac{p(R_Y=1 \mid Y^{(1)}=y_0, {\bf Z}) \times p(Y^{(1)} \mid R_Y=1, {\bf Z}) \times \odds(Y^{(1)}, R_Y=1 \mid {\bf Z})}{\sum_{R_Y} p(R_Y \mid Y^{(1)}=y_0, {\bf Z}) \times p(Y^{(1)} \mid R_Y=1, {\bf Z}) \times \odds(Y^{(1)}, R_Y \mid {\bf Z})} \\
        &=^{(6)} \frac{p(R_Y=1 \mid Y^{(1)}=y_0, {\bf Z}) \times p(Y^{(1)} \mid R_Y=1, {\bf Z})}{p(Y^{(1)} \mid R_Y=1, {\bf Z}) \times \sum_{R_Y} p(R_Y \mid Y^{(1)}=y_0, {\bf Z}) \times \odds(Y^{(1)}, R_Y \mid {\bf Z})} \\
        &=^{(7)} \frac{p(R_Y=1 \mid Y^{(1)}=y_0, {\bf Z})}{\sum_{R_Y} p(R_Y \mid Y^{(1)}=y_0, {\bf Z}) \times \odds(Y^{(1)}, R_Y \mid {\bf Z})} \\
        &=^{(8)} \frac{\pi_0({\bf Z})}{\pi_0({\bf Z}) + \eta(Y^{(1)}, {\bf Z})(1-\pi_0({\bf Z}))}
    \end{align*}
\end{proof}

\section{Additional Simulation Results}

We report additional simulation results for simulations of the search algorithm described in section \ref{sec:experiments}. Table \ref{tab:covariate_search_pval0.01} gives results when using $\alpha=0.01$ to conclude dependence between two variables.
\begin{table}[ht]
    \centering
    \begin{tabular}{|c|c|c|}
    \hline
    \begin{tabular}[c]{@{}c@{}}{\bf Sample} {\bf Size}\end{tabular} & \multicolumn{1}{l|}{\bf Sensitivity} & {\bf Specificity} \\ \hline
    500   & 0.0 & 0.394 \\ \hline
    2500  & 0.269 & 0.383 \\ \hline
    5000  & 0.690 & 0.717 \\ \hline
    10000 & 0.828 & 0.952 \\ \hline
    \end{tabular}
    \caption{Tables showing the accuracy of tests for different sample sizes and $\alpha=0.01$.}
    \label{tab:covariate_search_pval0.01}
\end{table}

Next, Table \ref{tab:covariate_search_pval0.1} gives the results of the simulations when using $\alpha=0.1$ to conclude dependence between two variables.

\begin{table}[ht]
    \centering
    \begin{tabular}{|c|c|c|}
    \hline
    \begin{tabular}[c]{@{}c@{}}{\bf Sample} {\bf Size}\end{tabular} & \multicolumn{1}{l|}{\bf Sensitivity} & {\bf Specificity} \\ \hline
    500   & 0.022 & 0.363 \\ \hline
    2500  & 0.671 & 0.630 \\ \hline
    5000  & 0.837 & 0.770 \\ \hline
    10000 & 0.949 & 0.857 \\ \hline
    \end{tabular}
    \caption{Tables showing the accuracy of tests for different sample sizes and $\alpha=0.1$.}
    \label{tab:covariate_search_pval0.1}
\end{table}

As the p-value increases, the accuracy of the tests for correctly predicting an adjustment set when one is possible increases. On the other hand, the accuracy of the tests for correctly identifying that there is no possible adjustment set when no such set exists decreases as p-value increases. For all p-values, the accuracy of the tests in general increase as the sample size increases.

\begin{figure}[ht]
    \centering
    \includegraphics[scale=0.6]{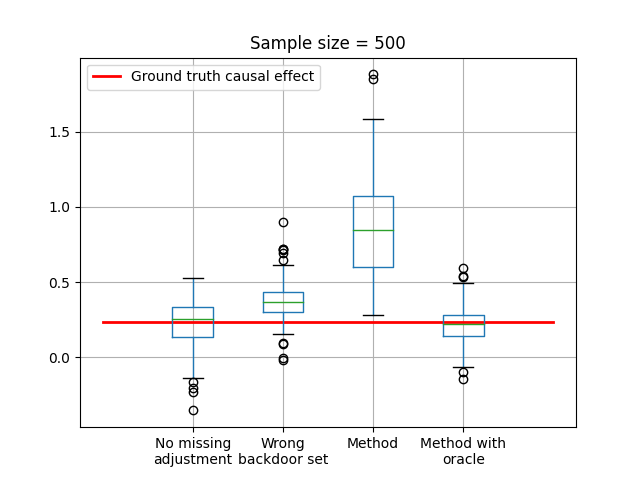}
    \caption{Estimation results for sample size 500.}
    \label{fig:estimation_500}
\end{figure}

Next, we report additional simulation results for estimation of the causal effect described in Section~\ref{sec:experiments}. Figure~\ref{fig:estimation_500} shows the estimation results for sample size 500. When using the correct adjustment method, our practical estimation method is able to accurately recover the causal effect. However, using the full pipeline of our method, the estimates are fairly inaccurate. This is to be expected, as the sensitivity of the covariate search at a small sample size is quite inaccurate regardless of p-value. In addition, due to missing data, the effective sample size of the data is roughly $300$.

\end{document}